\def\E{\mathbb{E}}
\def\F{\mathbb{F}}
\def\disc{\mathrm{disc}}
\def\eqdef{\stackrel{\textrm{def}}{=}}
\def\eps{\varepsilon}
\def\CC{\mathrm{CC}}
\def\CCdet{\CC^{\mathrm{det}}}
\def\rank{\mathrm{rank}}
\newtheorem{thm}{Theorem}[section]
\newtheorem*{thm*}{Theorem}
\newtheorem{claim}[thm]{Claim}
\newtheorem*{claim*}{Claim}
\newtheorem{lemma}[thm]{Lemma}
\newtheorem*{lemma*}{Lemma}
\newtheorem*{prop*}{Proposition}
\newtheorem{cor}[thm]{Corollary}
\newtheorem*{cor*}{Corollary}
\newtheorem{conj}[thm]{Conjecture}
\newtheorem*{conj*}{Conjecture}
\newtheorem*{dfn*}{Definition}
\theoremstyle{remark}
\newtheorem*{rmk*}{Remark}
\newtheorem*{rmks*}{Remarks}
\newcommand{\restate}[2]{\medskip \noindent{\bf #1 (restated).}{\sl #2}}
\title{Communication is bounded by root of rank}
\author{
Shachar Lovett\thanks{CSE department, UC San-Diego. e-mail: \texttt{slovett@cse.ucsd.edu}.}
}
\begin{document}

\maketitle

\begin{abstract}
We prove that any total boolean function of rank $r$ can be computed by a deterministic communication protocol of complexity $O(\sqrt{r} \cdot \log(r))$.
Equivalently, any graph whose adjacency matrix has rank $r$ has chromatic number at most $2^{O(\sqrt{r} \cdot \log(r))}$.
This gives a nearly quadratic improvement in the dependence on the rank over previous results.
\end{abstract}

\section{Introduction}

The \textit{log-rank conjecture} proposed by Lov\'{a}sz and Saks~\cite{LS88_Lat} suggests that for any boolean function $f:X \times Y \to \{-1,1\}$ its deterministic communication complexity $\CCdet(f)$ is polynomially related to the logarithm of the rank of the associated matrix. Validity of this conjecture is one of the fundamental open problems in communication complexity.
Very little progress has been made towards resolving it. The best upper bound, until recently, was
$$
\CCdet(f)\le \log(4/3) \cdot \rank(f),
$$
due to Kotlov~\cite{K97_Rank}. In terms of lower bounds, Kushilevitz (unpublished, cf.~\cite{NW94_On_Ra}) gave an example of a family of functions with $\CCdet(f) \ge (\log \rank(f))^{^{\log_36}}$.
Recently, a conditional improvement was made by Ben-Sasson, Ron-Zewi and the author~\cite{BLR12_An_Ad}, who showed that assuming a number-theoretic conjecture (the polynomial Freiman-Ruzsa conjecture), $\CCdet(f)\le O(\rank(f)/\log\rank(f))$. In this paper, we establish the following (unconditional) improved upper bound on the deterministic communication complexity.

\begin{thm}\label{thm:det}
Let $f:X \times Y \to \{-1,1\}$ be a boolean function with rank $r$. Then there exists a deterministic protocol computing $f$ which uses $O(\sqrt{r} \cdot \log r)$ bits of communication.
\end{thm}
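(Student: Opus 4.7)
My plan is to reduce Theorem~\ref{thm:det} to a monochromatic rectangle lemma via the Nisan--Wigderson framework, and then prove that lemma via a spectral covering argument.

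\textbf{Step 1: Reduction to a rectangle lemma.} I would aim to prove the key structural statement that every $\pm 1$ matrix $M$ of rank $r$ contains a monochromatic combinatorial rectangle $A \times B$ of relative density at least $2^{-\tilde{O}(\sqrt{r})}$ (where $\tilde{O}$ hides $\log r$ factors). Given this, Theorem~\ref{thm:det} follows from the standard Nisan--Wigderson rectangle-covering protocol: the players agree on the rectangle in advance, each sends $O(\sqrt{r} \log r)$ bits to indicate membership, and recurse in one of the three complementary subrectangles if membership fails. After $O(\log r)$ productive levels (each of which peels off a large monochromatic piece and forces a constant-factor drop in the effective rank, by a rank-reduction argument on the surviving submatrix), the protocol terminates with total communication $O(\sqrt{r} \cdot \log r)$.

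\textbf{Step 2: Strategy for the rectangle lemma.} Write $M(x, y) = \langle u_x, v_y \rangle$ from a rank-$r$ factorization $M = U V^\top$ with $u_x, v_y \in \mathbb{R}^r$. After rebalancing the SVD, arrange that $\E_x \|u_x\|^2$ and $\E_y \|v_y\|^2$ are of order $r$, so by Markov a constant fraction of rows ($X_0$) and columns ($Y_0$) satisfy $\|u_x\|, \|v_y\| = O(\sqrt{r})$. The plan is to partition the bounded-norm rows via a net on $\mathbb{R}^r$ at a scale $\delta$ small enough that any two $u_x, u_{x'}$ in the same cell give $|\langle u_x - u_{x'}, v_y\rangle| < 1$ for every $y \in Y_0$, thereby forcing $M(x, y) = M(x', y)$ on all of $Y_0$. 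Pigeonhole selects a cell $A \subseteq X_0$ with $|A| \geq |X_0|/N$, where $N$ is the net cardinality; on $A \times Y_0$ all rows are identical, so splitting $Y_0$ according to the common sign produces a monochromatic rectangle of size at least $|A| \cdot |Y_0|/2$.

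\textbf{Step 3: The main obstacle.} The crux is to make $\log N = \tilde{O}(\sqrt{r})$. A naive $\ell_2$-net of the ball of radius $\sqrt{r}$ in $\mathbb{R}^r$ at the required scale costs $\log N = \Theta(r \log r)$, which would only reproduce the existing linear-in-$r$ bounds. To beat this, I would exploit the singular value distribution: truncate $U$ (and $V$) to its top $k \approx \sqrt{r}$ singular directions, approximate $u_x$ by its projection $\tilde u_x$ into this $k$-dimensional subspace, and show that the tail of $u_x$ contributes negligibly to $\langle u_x, v_y \rangle$ on the overwhelming majority of pairs $(x, y)$, after discarding a small fraction of ``bad'' rows/columns. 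A net in $\mathbb{R}^k$ has log-cardinality $\tilde{O}(k) = \tilde{O}(\sqrt{r})$, producing the advertised density. Carrying out the truncation argument quantitatively --- bounding the mass of rows/columns that must be discarded in terms of the spectral tail $\sum_{i > k} \sigma_i^2$, and balancing this loss against the net resolution --- is the technical heart of the argument and is where the quadratic improvement over Kotlov's bound is genuinely created.
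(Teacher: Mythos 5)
Your Step 1 is essentially the paper's reduction: the paper also proves that every rank-$r$ sign matrix contains a monochromatic rectangle of density $2^{-O(\sqrt{r}\log r)}$ and then feeds this into the Nisan--Wigderson recursion (Theorem~\ref{thm:NW}). The divergence is entirely in Steps 2--3, where you attempt a direct spectral/net argument, while the paper routes through discrepancy: it uses the Linial--Mendelson--Schechtman--Shraibman bound $\disc(f)\ge 1/(8\sqrt{r})$, proves an amplification lemma (via Yao's min-max over distributions on rectangles and a $t$-fold independent intersection of sampled rectangles) that converts high discrepancy into a large \emph{nearly}-monochromatic rectangle, and then upgrades ``nearly'' to ``exactly'' monochromatic using the low rank.

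The spectral truncation in Step 3 has a genuine gap. There is no reason the top $k\approx\sqrt{r}$ singular directions carry most of the Frobenius mass of a rank-$r$ sign matrix. A concrete obstruction: take $M = H_r \otimes J_{n/r}$, where $H_r$ is the $r\times r$ Hadamard matrix and $J$ is the all-ones matrix. This is a $\pm1$ matrix of rank exactly $r$ with a perfectly flat spectrum --- all $r$ nonzero singular values equal $n/\sqrt{r}$. Truncating the factorization to $k=\sqrt{r}$ directions discards a $1-1/\sqrt{r}$ fraction of $\sum_i\sigma_i^2 = n^2$, and a direct computation gives $\E_{x,y}\bigl[\langle u_x^{\mathrm{tail}},v_y^{\mathrm{tail}}\rangle^2\bigr] = \tfrac{1}{n^2}\sum_{i>k}\sigma_i^2 = 1-o(1)$, i.e.\ the tail contributes $\Theta(1)$ --- the same magnitude as the $\pm1$ signal --- on a constant fraction of entries, not on a negligible fraction. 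So the truncation destroys the sign information precisely when you need it, and discarding a ``small fraction of bad rows/columns'' cannot fix this since the bad set has constant density. Without an additional idea to handle flat or slowly decaying spectra, the net argument reverts to the $r$-dimensional cost $\Theta(r\log r)$ you correctly identified as the obstacle. The paper sidesteps this entirely: the discrepancy lower bound $1/(8\sqrt{r})$ is already a consequence of $\gamma_2(M)\le\sqrt{r}$ and is insensitive to how the spectrum is distributed among the $r$ directions; the amplification lemma then works purely with rectangles and never needs to re-open the factorization.
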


The log-rank conjecture can be equivalently formulated as the relation between the rank of the adjacency matrix of a graph and its chromatic number. In this formulation, Theorem~\ref{thm:det} shows that any graph with adjacency matrix of rank $r$ has chromatic number at most $2^{O(\sqrt{r} \cdot \log r)}$.

%We also establish a version for randomized protocols and approximate rank. A randomized protocol is said to compute $f$ if for every input $x,y$, the protocol computes the correct value $f(x,y)$ with probability at least $2/3$. The approximate rank of $f$ is the minimal rank of a real $X \times Y$ matrix $M$ for which $2/3 \le M_{x,y} f(x,y) \le 1$ for all $x \in X, y \in Y$. Similar to the deterministic case, log of the approximate rank of $f$ is a lower bound on the randomized communication complexity of $f$. We prove the following upper bound.
%
%\begin{thm}\label{thm:rand}
%Let $f:X \times Y \to \{-1,1\}$ be a boolean function with approximate rank $r$. Then there exists a randomized protocol computing $f$ which uses $O(\sqrt{r})$ bits of communication.
%\end{thm}

\subsection{Proof overview}
The proof is based on analyzing the discrepancy of boolean functions. The discrepancy of a boolean function $f$ is given by
$$
\disc(f) = \min_{\mu} \max_R \left| \sum_{(x,y) \in R} f(x,y) \mu(x,y) \right|
$$
where $\mu$ ranges over all distributions over $X \times Y$ and $R$ ranges over all rectangles, e.g. $R=A \times B$ for $A \subset X, B \subset Y$.
Discrepancy is a well-studied property in the context of communication complexity lower bounds, see e.g.~\cite{Lokam09:book} for an excellent survey.
It is known that low-rank matrices have noticeable discrepancy~\cite{LMSS07_Com,LS09_Lea}: if $f$ has rank $r$ then
$$
\disc(f) \ge \frac{1}{8\sqrt{r}}.
$$
%A similar result holds if $f$ has approximate rank $r$, with a different constant.

Discrepancy can be used to prove upper bounds as well. Linial et al.~\cite{LMSS07_Com} showed that functions of discrepancy $\delta$ have randomized
(or quantum) protocols of complexity $O(1/\delta^2)$. Unfortunately, this does not give any improved bounds in general, as there is always a trivial
protocol using $r$ bits. We show that the combination of high discrepancy and low rank implies an improved bound. Our main new technical lemma shows
that if $f$ is a boolean function with discrepancy $\delta$, then there exist a large rectangle on which $f$ is nearly monochromatic. In the following,
we denote by $\E[f|R]$ the average value of $f$ on a rectangle $R$.

\begin{lemma}\label{lemma:amplification-intro}
Let $f:X \times Y \to \{-1,1\}$ be a function with $\disc(f)=\delta$. Then there exists a rectangle $R$ of size
$$
|R| \ge 2^{-O(\delta^{-1} \cdot \log(1/\eps))} |X \times Y|
$$
such that $\big| \E[f|R] \big| \ge 1-\eps$.
\end{lemma}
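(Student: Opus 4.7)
The plan is to iteratively build a nested sequence $R_0 \supseteq R_1 \supseteq \cdots$ starting from $R_0 = X \times Y$, pushing the bias $\beta_i = \E[f \mid R_i]$ toward $1$ at each step; after possibly replacing $f$ by $-f$ I may assume $\beta_0 \geq 0$, and the construction will preserve this sign. The key preliminary observation is that $\disc(f|_{R_i}) \geq \disc(f) = \delta$: any distribution supported on $R_i$ is already a valid distribution on $X \times Y$ for the outer discrepancy problem, so the outer lower bound transfers. Hence discrepancy $\delta$ is available at every stage.

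At each step I apply this discrepancy to the \emph{balanced} distribution $\mu_i$ on $R_i$, which puts mass $1/(2|S_i^+|)$ on each positive point and $1/(2|S_i^-|)$ on each negative point (where $S_i^{\pm} = \{(x,y) \in R_i : f(x,y) = \pm 1\}$). Since $\E_{\mu_i}[f] = 0$, the discrepancy bound produces a sub-rectangle $R' \subseteq R_i$ with $|u - v| \geq 2\delta$, where $u = |R' \cap S_i^+|/|S_i^+|$ and $v = |R' \cap S_i^-|/|S_i^-|$. If $u - v \geq 2\delta$, set $R_{i+1} := R'$. Otherwise $v - u \geq 2\delta$: then $R_i \setminus R'$ is a union of three rectangles whose $(u_{jk} - v_{jk})$ values sum to $v - u \geq 2\delta$, so by pigeonhole one of them $R''$ satisfies $u'' - v'' \geq 2\delta/3$; set $R_{i+1} := R''$. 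Letting $\alpha_i := |R_{i+1}|/|R_i|$ and $p_i, q_i$ the $\pm$-fractions in $R_i$, a direct expansion yields
\[
\beta_{i+1} - \beta_i \;\geq\; \frac{2 p_i q_i (u' - v')}{\alpha_i} \;\geq\; \frac{(1 - \beta_i^2)\,\delta}{3\,\alpha_i},
\]
which (using $1 + \beta_i \geq 1$) gives the multiplicative deficit recursion
\[
1 - \beta_{i+1} \;\leq\; (1 - \beta_i)\Bigl(1 - \frac{\delta}{3\,\alpha_i}\Bigr).
\]

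After $T$ steps, $|R_T|/|X \times Y| = \prod_i \alpha_i$ and $1 - \beta_T \leq \prod_i (1 - \delta/(3\alpha_i))$. The last ingredient is a one-variable inequality: for every $\alpha \in (0, 1]$,
\[
\log(1/\alpha) \;\leq\; O(1/\delta) \cdot \log\!\bigl(1/(1 - \delta/(3\alpha))\bigr),
\]
with worst case near $\alpha = 1/e$ (where $\alpha \log(1/\alpha)$ is maximal). Summing this inequality over $i$, the total log-size loss is at most $O(1/\delta)$ times the total log-deficit gain, and the latter is $O(\log(1/\eps))$ once the deficit has dropped to $\eps$; hence $|R_T| \geq 2^{-O(\log(1/\eps)/\delta)} |X \times Y|$ as claimed. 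I expect the main nuisance to be the sign ambiguity from each discrepancy call, which forces the four-way split (at a constant-factor cost), together with verifying the balancing inequality uniformly in $\alpha \in [\Omega(\delta), 1]$; the remainder is routine amplification bookkeeping.
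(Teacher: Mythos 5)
Your proposal is correct in its essentials, and it is genuinely different from the proof printed in the paper. The paper credits exactly this iterative-amplification strategy as the author's original argument, but presents instead a simplification due to Salil Vadhan: apply Yao's minimax principle to the four-rectangle discrepancy step to obtain a single distribution $\rho$ over rectangles such that every $(x,y) \in f^{-1}(1)$ is covered with probability at least $p$ and every $(x,y) \in f^{-1}(-1)$ with probability at most $q$, where $p - q \geq 2\delta/3$; one then takes the intersection $R^* = R_1 \cap \cdots \cap R_t$ of $t = O(\delta^{-1}\log(1/\eps))$ independent samples and checks that $\E\bigl[\mu(R^*) - (1/\eps)\mu(R^* \cap f^{-1}(-1))\bigr] \geq p^t/4 > 0$. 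That one-shot calculation replaces your entire step-by-step bias accounting, and in particular makes the one-variable inequality and the stopping-time bookkeeping disappear. Both routes rest on the same two ideas (balancing $\sigma$ so that $\E_\sigma[f]=0$ forces one of the four complementary rectangles to have signed bias $\geq \delta/3$; and turning a $\Theta(\delta)$ additive bias gap into a $1-\eps$ rectangle at cost $2^{-O(\delta^{-1}\log(1/\eps))}$), so neither is strictly more general, but the minimax version is shorter and easier to verify.

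One point in your sketch that needs to be pinned down: you bound $\sum_i \log\bigl(1/(1 - \delta/(3\alpha_i))\bigr)$ by $\log\bigl(1/(1-\beta_T)\bigr)$ and declare this $O(\log(1/\eps))$, but nothing prevents the last step from overshooting so that $1 - \beta_T \ll \eps$ (the recursion only upper-bounds the deficit, and the per-step gain $\frac{(1+\beta_i)(u_i-v_i)}{2\alpha_i}$ can be as large as $\Theta(1/\delta)$). The clean fix is to stop at the \emph{first} $T$ with $1-\beta_T \leq \eps$, apply the potential argument only to steps $0,\dots,T-2$ (where $1-\beta_{T-1} > \eps$ keeps the sum bounded by $\log(1/\eps)$), and bound the final step's loss separately by $\log(1/\alpha_{T-1}) \leq \log(3/\delta)$, using the fact (which you should state explicitly, as it also makes your one-variable inequality well-posed) that $\alpha_i = u_i p_i + v_i q_i \geq (2\delta/3)\cdot(1/2) = \delta/3$, since $\beta_i \geq 0$ gives $p_i \geq 1/2$ and $u_i \geq 2\delta/3$. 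This adds only $O(\log(1/\delta))$ to the exponent and does not affect the final bound.
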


In fact, we prove a more general lemma which holds under general distributions. Now, if $f$ has low rank, we apply Lemma~\ref{lemma:amplification-intro} with $\eps=1/2r$ to deduce the existence of a large rectangle $R$ with $\big| \E[f|R] \big| \ge 1-1/2r$. Next, we apply the following claim from~\cite{GL13:lowrank_equiv}, which shows that low rank matrices which are nearly monochromatic contain large monochromatic rectangles.

\begin{claim}[\cite{GL13:lowrank_equiv}]\label{claim:mono-intro}
Let $f:X \times Y \to \{-1,1\}$ be a function with $\rank(f)=r$ and $\E[f|R] \ge 1-1/2r$. Then there exists a sub-rectangle $R' \subset R$ of size $|R'| \ge |R|/8$ such that $f$ is monochromatic on $R'$.
\end{claim}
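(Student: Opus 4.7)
The plan is to pass to the indicator matrix of ``errors'' and search for a large all-zero sub-rectangle. Concretely, define $N : A \times B \to \{0,1\}$ by $N(x,y) = (1-f(x,y))/2$, so $N$ marks where $f = -1$. Then $N$ has rank at most $r+1$ (as $N = (J - f|_R)/2$ with $J$ the all-ones matrix), and the hypothesis $\E[f|R] \ge 1 - 1/(2r)$ translates to $\sum N \le |R|/(4r)$. Finding $A' \times B' \subset R$ of size at least $|R|/8$ on which $N \equiv 0$ is equivalent to producing a $+1$-monochromatic sub-rectangle for $f$, so this becomes the goal.

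The idea is to choose $A'$ as a large set of ``light'' rows (few $1$'s in $N$) and $B'$ as the common zero-set of a basis for their row span. Fix a threshold $T$ and let $A_1 := \{x \in A : w(x) \le T\}$ where $w(x) := |\{y : N(x,y)=1\}|$; Markov's inequality keeps most rows once $T$ is a bit above the average row weight $|B|/(4r)$. Crucially, the rows indexed by $A_1$ literally span the row space of $N|_{A_1 \times B}$, whose dimension $r_1$ is bounded by $r+1$, so one can extract a basis of $r_1$ rows $v_1,\dots,v_{r_1}$ each of weight at most $T$. Define $B' := \{y : v_i(y) = 0 \text{ for all } i\}$; by linearity, $N(x,y) = \sum_i \alpha_i(x) v_i(y) = 0$ for every $x \in A_1$ and $y \in B'$, so $A_1 \times B'$ is the desired zero sub-rectangle, with $|B \setminus B'| \le r_1 T$ by a union bound.

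The only real task is to balance the threshold $T$ so that both $|A_1| \ge |A|/4$ and $|B'| \ge |B|/2$ hold simultaneously. Setting $T := |B|/(2(r+1))$ works for $r \ge 2$: Markov yields $|A \setminus A_1| \le |A|(r+1)/(2r) \le 3|A|/4$, while the basis bound yields $|B \setminus B'| \le (r+1) \cdot |B|/(2(r+1)) = |B|/2$, so $|A_1 \times B'| \ge |R|/8$. The case $r=1$ needs a separate argument because the Markov step degenerates there, but it is straightforward from the rank-1 factorization $f = uv^T$: this splits $R$ into four sign-agreement blocks, and $\E[f|R] \ge 1/2$ forces one of the two $+1$-blocks to have size at least $3|R|/8$.
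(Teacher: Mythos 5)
Your proof is correct and follows essentially the same approach as the paper's: Markov to pick out rows with few $-1$'s, a spanning set of those rows whose size is bounded by the rank, and the columns on which all spanning rows agree. The only (cosmetic) difference is that you pass to the error indicator $N=(1-f)/2$ of rank $\le r+1$ and zero it out, while the paper works directly with $f$ of rank $\le r$ and concludes every row of $A'\times B'$ is constant before taking the majority color; staying with $f$ avoids the rank bump to $r+1$, which is precisely what forces your separate treatment of the case $r=1$.
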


Finally, we apply a theorem of Nisan and Wigderson~\cite{NW94_On_Ra}, who showed that in order to establish that low rank matrices have efficient deterministic protocols, it suffices to show that they have large monochromatic rectangles (which is what we just showed).

\begin{thm}[\cite{NW94_On_Ra}]\label{thm:NW}
Assume that for any function $f:X \times Y \to \{-1,1\}$ of $\rank(f)=r$ there exists a monochromatic rectangle of size $|R| \ge 2^{-c(r)} |X \times Y|$. Then any boolean function of rank $r$ is computable by a deterministic protocol of complexity $O(\log^2{r}+\sum_{i=0}^{\log{r}} c(r/2^i))$.
\end{thm}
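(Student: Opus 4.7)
I would prove the theorem by induction on the rank $r$, targeting the recursion
\[
\Phi(r) \le c(r) + O(\log r) + \Phi(\lfloor r/2 \rfloor),
\]
which unrolls over its $\log r$ levels to the claimed $O(\log^2 r + \sum_{i=0}^{\log r} c(r/2^i))$ bound; the base case $r = O(1)$ is handled by a trivial protocol.

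The atomic step of the protocol works as follows. At a node of the protocol with surviving rectangle $A \times B$ on which $f$ has rank at most $r$, both players (who know $f$) deterministically compute the monochromatic sub-rectangle $R^* = A^* \times B^*$ of relative density at least $2^{-c(r)}$ guaranteed by the hypothesis. Alice sends $[x \in A^*]$ and Bob sends $[y \in B^*]$, costing two bits. If both bits are $1$ they output the known monochromatic value and halt; otherwise they descend into the designated complement sub-rectangle. Each non-terminating atomic step shrinks the surviving rectangle by a factor of at least $2^{-c(r)}$ while preserving the rank bound, and sends only two bits.

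The central claim, and the main obstacle, is a rank-halving lemma: after $T = O(c(r))$ atomic steps, either the protocol has halted or the surviving sub-rectangle has $\rank(f|_{\cdot}) \le \lfloor r/2 \rfloor$. I would prove this via a two-parameter potential argument tracking (i) the logarithm of the surviving rectangle's relative density and (ii) a dimension surrogate that records how many independent column-space constraints have been imposed by the monochromatic rectangles peeled off along the current branch. Each atomic step that neither halts nor lowers the rank must drive the density potential down by at least $c(r)$, while the dimension surrogate cannot fall below $r/2$ unless the rank has already halved; combining the two forces a rank drop within $O(c(r))$ steps. The delicate part is linking each peeled-off monochromatic rectangle to a genuine new constraint on the column span of the surviving submatrix, which requires a careful covering/counting argument.

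Once the rank-halving lemma is in place, the inductive step follows directly: the $T = O(c(r))$ atomic steps cost $O(c(r))$ bits total, an extra $O(\log r)$ bookkeeping bits handle the clean transition to the recursive call on the rank-$\le \lfloor r/2 \rfloor$ submatrix (absorbing rounding slack and identifying at which step along the branch the rank drop was certified), and the recursion then contributes $\Phi(\lfloor r/2 \rfloor)$. Summing over the $\log r$ levels yields both the $\sum_i c(r/2^i)$ main term and the $O(\log^2 r)$ additive term from the per-level bookkeeping, completing the proof.
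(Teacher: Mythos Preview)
Your proposal has a genuine gap at its core: the ``rank-halving lemma'' is neither proved nor true as stated, and the mechanism you sketch does not work.

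First, the density accounting is off by an exponential. A non-halting atomic step shrinks the surviving rectangle by a factor of at most $(1-2^{-c(r)})$, so the log-density drops by roughly $2^{-c(r)}$ per step, not by $c(r)$. Thus density alone only forces termination after $O(2^{c(r)}\cdot r)$ steps (using that there are at most $2^{O(r)}$ distinct rows and columns), not after $O(c(r))$ steps. Second, the ``dimension surrogate'' is doing all the work but is never made precise: peeling off a monochromatic rectangle $A^*\times B^*$ imposes no constraint whatsoever on the column space of the surviving sub-rectangle. In particular, if at every step the input lands in the $\bar{A^*}\times\bar{B^*}$ quadrant, the rank need not drop at all. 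So along some branches your protocol simply does not halve the rank in $O(c(r))$ steps.

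What you are missing is the one-line linear-algebra fact that drives the Nisan--Wigderson argument: writing the matrix as
\[
\begin{pmatrix} R & S \\ P & Q \end{pmatrix}
\]
with $R$ monochromatic (hence rank $1$), one has $\rank(S)+\rank(P)\le r+1$, so one of $S,P$ already has rank $\le r/2+1$. The protocol then sends \emph{one} bit (a row split or a column split, chosen so that the low-rank side is, say, the top) and obtains a clean dichotomy: either the rank drops to $\le r/2+1$ immediately, or the matrix shrinks by a $(1-2^{-c(r)})$ factor. The resulting phase tree is essentially a path with $O(2^{c(r)}\log m)$ leaves, and balancing it yields $O(c(r)+\log r)$ bits for the phase. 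Your four-way split throws away exactly this dichotomy, since the $Q$ quadrant gives neither a rank drop nor any usable structural information.
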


As the proof in~\cite{NW94_On_Ra} is shown only for the special case related to the log-rank conjecture, we include a proof sketch of Theorem~\ref{thm:NW} for general function $c(r)$, in Section~\ref{sec:nw}. Theorem~\ref{thm:det} now follows by setting $c(r)=O(\sqrt{r} \cdot \log(r))$.

\subsection{Related works}
A recent work of Tsang et al~\cite{tsang2013fourier} established similar bounds to Theorem~\ref{thm:det} for the special case of functions of the form $f(x,y)=F(x \oplus y)$. Although the results
are similar, the techniques seem to be different. In particular, the main tool used in~\cite{tsang2013fourier} is Fourier analysis, while our results are based on discrepancy.
It would be interesting to understand if there are deeper connections between these techniques. Another recent work of Gavinsky and the author~\cite{GL13:lowrank_equiv} showed that in order
to prove the log-rank conjecture, it suffices to show that any low rank matrix has an efficient randomized protocol, a low information cost protocol, or an efficient zero-communication protocol.

\paragraph{Paper organization.} We give preliminary definitions in Section~\ref{sec:prelim}. We prove Lemma~\ref{lemma:amplification-intro} in Section~\ref{sec:amplification}. We prove Theorem~\ref{thm:det} in Section~\ref{sec:det}. We give a proof sketch of Theorem~\ref{thm:NW} in Section~\ref{sec:nw}. We discuss a conjecture related to matrix rigidity in Section~\ref{sec:rigidity}, and further open problems in Section~\ref{sec:summary}.

\section{Preliminaries}
\label{sec:prelim}

For standard definitions in communication complexity we refer the reader to~\cite{KN97_Com}. We give here only the basic definitions we would require.

Let $f:X \times Y \to \{-1,1\}$ be a total boolean function, where $X$ and $Y$ are finite sets. If $\mu$ is a distribution over $X \times Y$ then we denote by
$\E_{\mu}[f]=\sum_{x,y} \mu(x,y) f(x,y)$ the average of $f$ under $\mu$. A {\emph rectangle} is a set $R = A \times B$ for $A \subset X, B \subset Y$. We denote by $\E[f|R]$ the average of $f$ under the uniform distribution over $R$, and more generally by $\E_{\mu}[f|R]$ the average of $f$ under the conditional distribution of $\mu$ conditioned to be in $R$. A rectangle is {\emph monochromatic} if $f(x,y)=1$ for all $x,y \in R$ or $f(x,y)=-1$ for all $x,y \in R$.

The {\emph rank} of $f$ is the rank (over the reals) of its associated $X \times Y$ matrix.
%The \textit{approximate rank} of $f$ is the minimal rank of a real $X \times Y$ matrix $M$ such that $2/3 \le M_{x,y} f(x,y) \le 1$ for all $x \in X, y \in Y$.
The {\emph discrepancy} of $f$ with respect to a distribution $\mu$ on $X \times Y$ is the maximal bias achieved by a rectangle,
$$
\disc_{\mu}(f) \eqdef \max_{\textrm{rectangle }R} \left| \sum_{(x,y) \in R} \mu(x,y) f(x,y) \right|.
$$
The discrepancy of $f$ is the minimal discrepancy possible over all possible distributions $\mu$,
$$
\disc(f) \eqdef \min_{\mu} \disc_{\mu}(f).
$$

Note that discrepancy is an hereditary property. That is, if $R$ is a rectangle then the discrepancy of $f$ restricted to $R$ is at least the original discrepancy of $f$. Similarly, low rank is an hereditary property, as ranks of sub-matrices cannot exceed the rank of the original matrix. We will rely on the following theorem which lower bounds the discrepancy of functions with low rank.
\begin{thm}[\cite{LMSS07_Com,LS09_Lea}]\label{thm:disc}
Let $f:X \times Y \to \{-1,1\}$ be a function with rank $r$. Then $\disc(f) \ge 1/8\sqrt{r}$.
\end{thm}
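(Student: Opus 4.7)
The plan is to follow the proof of~\cite{LMSS07_Com,LS09_Lea}, which lower bounds the discrepancy of a low-rank sign matrix via Grothendieck duality between the $\infty \to 1$ norm and the factorization norm $\gamma_2$.

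Fix an arbitrary distribution $\mu$ on $X \times Y$ and set $M_{xy} = \mu(x,y) f(x,y)$. A standard reduction --- writing each $\{-1,1\}$-valued vector as a signed difference of two $\{0,1\}$-indicators and expanding the resulting sum of four rectangles --- gives
$$
\disc_\mu(f) \;\ge\; \tfrac{1}{4} \|M\|_{\infty \to 1}, \qquad \|M\|_{\infty \to 1} \eqdef \max_{a \in \{-1,1\}^X,\; b \in \{-1,1\}^Y} |a^T M b|.
$$
So it suffices to prove $\|M\|_{\infty \to 1} \ge 1/(2\sqrt{r})$.

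The main input is Grothendieck's inequality, in the form
$$
|\langle M, B \rangle| \;\le\; K_G \cdot \gamma_2(B) \cdot \|M\|_{\infty \to 1} \qquad \text{for every matrix } B,
$$
where $\gamma_2(B) = \min\{r(X)\, r(Y) : B = X Y^T\}$ is the Hilbertian factorization norm ($r(X)$ being the maximum row $\ell_2$-norm) and $K_G < 2$ is Grothendieck's constant. The crucial substitution is $B = f$, because $f(x,y)^2 \equiv 1$ yields the clean identity
$$
\langle M, f\rangle = \sum_{x,y} \mu(x,y) f(x,y)^2 = \sum_{x,y} \mu(x,y) = 1.
$$
Rearranging gives $\|M\|_{\infty \to 1} \ge 1/(K_G \gamma_2(f))$.

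The remaining ingredient is a rank-based upper bound $\gamma_2(f) \le \sqrt{r}$ for sign matrices of rank $r$. This is where the hypothesis enters: a naive SVD factorization $f = (U\sqrt{\Sigma})(V\sqrt{\Sigma})^T$ controls only $\sum_x \|X_x\|^2 = \|f\|_* \le \sqrt{r\, |X||Y|}$ (by Cauchy--Schwarz on the singular values, using $\|f\|_F^2 = |X||Y|$), so one must rebalance to control the \emph{maximum} row norm rather than the average; a standard way is a Lewis-weight / John-ellipsoid rescaling, exploiting that $f$ has uniformly bounded entries. Combining the three steps gives
$$
\disc_\mu(f) \;\ge\; \frac{1}{4 K_G \sqrt{r}} \;\ge\; \frac{1}{8\sqrt{r}}
$$
since $K_G < 2$, and minimizing over $\mu$ completes the proof. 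The main obstacle is exactly this last step --- passing from the Schatten/nuclear-norm average $\|f\|_*/\sqrt{|X||Y|} \le \sqrt{r}$ to a bound on the \emph{maximum} row norm --- since the first two steps of the argument do not use the rank at all.
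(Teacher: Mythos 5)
The paper states Theorem~\ref{thm:disc} as a citation to \cite{LMSS07_Com,LS09_Lea} and gives no proof of it, so there is no in-paper argument to compare against. Your outline is the standard proof from those references and is essentially correct: the $4$-rectangle split gives $\disc_\mu(f) \ge \tfrac14\|M\|_{\infty\to 1}$, Grothendieck duality with the test matrix $B=f$ gives $\|M\|_{\infty\to 1}\ge 1/(K_G\,\gamma_2(f))$ via the clean identity $\langle M,f\rangle=1$, and the rank bound $\gamma_2(f)\le\sqrt r$ closes it with $4K_G<8$. The one place you leave a genuine gap --- and you are right to flag it --- is the bound $\gamma_2(f)\le\sqrt r$: the naive SVD factorization does not suffice, and the correct fix is exactly the John-ellipsoid rebalancing you name. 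Concretely, take any rank-$r$ factorization $f=UV^T$, let $E$ be the maximal-volume ellipsoid inside the symmetric convex body $K=\mathrm{conv}(\pm u_x)$, and pick $T$ with $T(E)=B_2^r$; by John's theorem for symmetric bodies $K\subseteq\sqrt r\,E$, so the rescaled rows $u_x'=Tu_x$ satisfy $\|u_x'\|\le\sqrt r$, while $B_2^r\subseteq\mathrm{conv}(\pm u_x')$ together with $|f_{xy}|=|\langle u_x',v_y'\rangle|\le 1$ forces $\|v_y'\|\le 1$, giving $\gamma_2(f)\le\sqrt r$. (``Lewis weights'' is not quite the right keyword here --- John's theorem for the symmetric convex hull is what does the work, and it is the boundedness $|f_{xy}|\le 1$, not just the rank, that controls the $v$-side.) With that step filled in, the argument is complete.
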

%We note for the learned reader that while the proof in the literature is for low rank matrices, it generalizes immediately also to matrices of low approximate rank. The bound $\gamma_2 \le O(\sqrt{r})$ still holds for the approximating matrix (see e.g. Lemma 4.2 in~\cite{LMSS07_Com}), and by Grothendick inequality this implies a lower bound of $\Omega(1/\sqrt{r})$ on the discrepancy of $f$ (see e.g. Theorem 3.1 in~\cite{LS09_Lea}).

\section{An amplification lemma}
\label{sec:amplification}

Our main technical lemma is the following lemma, which shows that any boolean function with high discrepancy contains a large rectangle which is nearly monochromatic.
\begin{lemma}\label{lemma:amplification}
Let $f:X \times Y \to \{-1,1\}$ be a function with $\disc(f)=\delta$. Then for any $\eps>0$ and any distribution $\mu$ over $X \times Y$, there exists a rectangle $R$ with
$$
\mu(R) \ge 2^{-O(\delta^{-1} \cdot \log(1/\eps))}
$$
such that $\big| \E_{\mu}[f|R] \big| \ge 1-\eps$.
\end{lemma}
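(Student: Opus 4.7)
The plan is iterative bias amplification exploiting the hereditary property of discrepancy: restricting $f$ and $\mu$ to any sub-rectangle preserves the bound $\disc \ge \delta$, so the discrepancy guarantee is available not just once but at every sub-rectangle we restrict to. After replacing $f$ by $-f$ if needed, I assume $\E_\mu[f] \ge 0$ and construct a nested chain $R_0 = X \times Y \supseteq R_1 \supseteq \cdots \supseteq R_T$ such that the error $\eta_i := \Pr_{\mu|_{R_i}}[f=-1]$ halves at each step and $\mu(R_{i+1})/\mu(R_i) \ge 2^{-O(1/\delta)}$. Since $\eta_0 \le 1/2$, taking $T = O(\log(1/\eps))$ steps yields $\eta_T \le \eps/2$, i.e.\ $\E_\mu[f|R_T] \ge 1-\eps$, with $\mu(R_T) \ge 2^{-O(\log(1/\eps)/\delta)}$, as required.

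Each halving step is itself realized by $O(1/\delta)$ inner discrepancy substeps, each losing only a constant factor in measure and shrinking the gap $1 - \E_\mu[f|R]$ by a multiplicative $(1-c\delta)$ for an absolute constant $c>0$ (so that after $O(1/\delta)$ substeps the gap, hence $\eta$, has halved). In one substep, hereditary discrepancy applied to the current rectangle $R$ under $\mu|_R$ produces $R^\star = A^\star \times B^\star \subseteq R$ with $\mu(R^\star)/\mu(R) \cdot |\E_\mu[f|R^\star]| \ge \delta$. I then split $R = A \times B$ into the four sub-rectangles determined by $R^\star$, and by a case analysis on the sign and magnitude of $\E_\mu[f|R^\star]$ and the mass distribution across the four pieces, extract one sub-rectangle with measure $\Omega(\mu(R))$ whose bias gap from $1$ has shrunk multiplicatively by $(1-c\delta)$: either $R^\star$ itself has bias noticeably above $b = \E_\mu[f|R]$ and we move there, or $R^\star$ has bias of the opposite sign, in which case the three complementary rectangles collectively carry excess positive mass $\Omega(\delta \mu(R))$, and an averaging argument produces a single one of them with both constant-fraction measure and the required multiplicative gap reduction.

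The main technical obstacle is the tight quantitative accounting. The target $2^{-O(\log(1/\eps)/\delta)}$ requires every substep to cost only a constant factor in measure, not a factor of $\delta$, since we will perform $\Theta(\log(1/\eps)/\delta)$ substeps in total; a factor $\delta$ per substep would incur an unwanted $\log(1/\delta)$ loss. The naive discrepancy guarantee only promises $\mu(R^\star) \ge \delta \mu(R)$, so the genuine progress must come from the three complementary rectangles in the four-partition, exactly in the ``bad'' subcases where $R^\star$ is small or has bias opposite to $b$. Carrying out the four-rectangle case analysis cleanly, and amortizing the losses across all substeps and halvings — possibly via a potential combining $\log(1/\eta)$ with a measure term — is where the proof requires the most care.
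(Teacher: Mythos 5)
Your plan — iterative amplification via a nested chain of rectangles, with a four-way split driven by hereditary discrepancy at every step — is a genuinely different route from the proof the paper presents. (The paper even remarks that its \emph{original} proof was iterative, but the one it gives is Vadhan's simplification: apply Yao's minimax principle to obtain a \emph{single} distribution $\rho$ over rectangles such that every $f^{-1}(1)$ point lands in a random $R\sim\rho$ with probability at least $p$, every $f^{-1}(-1)$ point with probability at most $q$, and $p-q\ge\tfrac{2}{3}\delta$; then intersect $t=O(p\delta^{-1}\log(1/\eps))$ i.i.d.\ samples so the ratio $q^t/p^t$ collapses to $\eps/2$ in one shot. No iteration, no amortization.)

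However, your sketch has a real gap at the core substep, and you yourself flag it without resolving it. First, the discrepancy guarantee you invoke, namely that there is $R^\star\subseteq R$ with
\[
\frac{\mu(R^\star)}{\mu(R)}\cdot\bigl|\E_\mu[f\mid R^\star]\bigr|\;\ge\;\delta,
\]
is satisfied trivially by $R^\star=R$ as soon as the current bias $b=\E_\mu[f\mid R]$ exceeds $\delta$ — exactly the regime where you still need to make progress. Discrepancy only yields non-trivial information when applied to a distribution under which $f$ is \emph{balanced}; the paper is careful to restrict to $\sigma$ with $\E_\sigma[f]=0$ before invoking the four-rectangle argument. Second, even after rebalancing, the stated dichotomy is not correct: if the distinguished quadrant $R_1$ of the four-way split carries excess \emph{positive} mass $a_1-c_1\ge\delta/3$ under the balanced measure, then the union $R_2\cup R_3\cup R_4$ necessarily carries excess \emph{negative} mass of the same amount (the four excesses sum to zero). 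So in the case where $R_1$ has tiny measure, the three complementary rectangles do not ``collectively carry excess positive mass $\Omega(\delta\mu(R))$'' — they carry excess negative mass, and none of them need have improved bias. This is precisely the obstacle that makes the naive iteration fail, and the proposal does not supply the potential function or amortization argument that would repair it; it only gestures at one. As written, each substep either costs a factor of $\delta$ in measure (yielding the weaker bound $2^{-O(\delta^{-1}\log(1/\delta)\log(1/\eps))}$) or provides no bias improvement at all, so the target exponent $O(\delta^{-1}\log(1/\eps))$ is not reached. The minimax-plus-intersection argument in the paper sidesteps this entirely, and is worth studying as the clean way to avoid the bookkeeping your approach runs into.
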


We note that Lemma~\ref{lemma:amplification-intro} from the introduction is a special case of Lemma~\ref{lemma:amplification} where $\mu$ is chosen to be the uniform distribution. Our original proof of Lemma~\ref{lemma:amplification} used an iterative amplification step. After giving a talk on this result in the Banff complexity workshop, Salil Vadhan suggested to us a simplified proof, which avoids the iterative step by applying Yao's mini-max principle. We present his proof below.

\begin{proof}
Let us assume without loss of generality that $\E_{\mu}[f] \ge 0$, otherwise apply the lemma to $-f$.
Let $\sigma$ be any distribution over $X \times Y$ such that $\E_{\sigma}[f]=0$. By assumption, there exists a rectangle $R_1$ such that
$$
\left|\sum_{(x,y) \in R_1} \sigma(x,y) f(x,y)\right| \ge \delta.
$$
Let $R_1 = A \times B$ and define $A' = X \setminus A, B' = Y \setminus B$. Consider the four rectangles
$$
R_1=A \times B, R_2 = A' \times B, R_3 = A \times B', R_4 = A' \times B'.
$$
As $\sum_{(x,y) \in X \times Y} \sigma(x,y) f(x,y) = \E_{\sigma}[f]=0$, there must exist a rectangle $R \in \{R_1,R_2,R_3,R_4\}$ such that
$$
\sum_{(x,y) \in R} \sigma(x,y) f(x,y) \ge \delta/3.
$$
As this holds for any distribution $\sigma$ for which $\E_{\sigma}[f]=0$, we can apply Yao's mini-max principle and deduce the following.
There exists a distribution $\rho$ over rectangles, such that, for any distribution $\sigma$ over $X \times Y$ for which $\E_{\sigma}[f]=0$, we have
$$
\E_{R \sim \rho}\left[ \sum_{(x,y) \in R} \sigma(x,y) f(x,y) \right] \ge \delta/3.
$$
Equivalently,
$$
\sum_{x \in X, y \in Y} \Pr_{R \sim \rho}[(x,y) \in R] \cdot \sigma(x,y) f(x,y) \ge \delta/3.
$$

Fix $(x_1,y_1) \in f^{-1}(1)$ and $(x_2,y_2) \in f^{-1}(-1)$. Let $\sigma$ be the distribution given by $\sigma(x_1,y_1)=\sigma(x_2,y_2)=1/2$. As $\E_{\sigma}[f]=0$ we have
$$
\Pr_{R \sim \rho}[(x_1,y_1) \in R] - \Pr_{R \sim \rho}[(x_2,y_2) \in R] \ge (2/3) \delta.
$$
Let $p$ be the \emph{minimal} probability that $(x_1,y_1) \in R$ over all $(x_1,y_1) \in f^{-1}(1)$, where $R$ is sampled according to $\rho$; and let $q$ be the \emph{maximal} probability that $(x_2,y_2) \in R$ over all $(x_2,y_2) \in f^{-1}(-1)$. We established that
$$
p - q \ge (2/3) \delta.
$$

Fix $t \ge 1$ and let $R_1,\ldots,R_t \sim \rho$ be chosen independently, and let $R^* = R_1 \cap \ldots \cap R_t$ be their intersection. We will show that for an appropriate choice of $t$, the rectangle $R^*$ satisfies the requirements of the lemma with positive probability (and hence such a rectangle exists). We will use the fact that for any $x \in X, y \in Y$,
$$
\Pr[(x,y) \in R^*] = \Pr_{R \sim \rho}[(x,y) \in R]^t.
$$
Consider the random variable
$$
T = \mu(R^*) - (1/\eps) \cdot \mu(R^* \cap f^{-1}(-1)).
$$
By linearity of expectation, we have
\begin{align*}
\E[T] &= \sum_{(x,y) \in f^{-1}(1)} \mu(x,y) \Pr[(x,y) \in R^*] - \sum_{(x,y) \in f^{-1}(-1)} \mu(x,y) ((1/\eps)-1) \Pr[(x,y) \in R^*] \\
& \ge \mu(f^{-1}(1)) \cdot p^t - \mu(f^{-1}(-1)) \cdot q^t / \eps \\
& \ge 1/2 \cdot (p^t - q^t / \eps),
\end{align*}
where we used our initial assumption that $\E_{\mu}[f]=\mu(f^{-1}(1))-\mu(f^{-1}(-1)) \ge 0$. We choose $t=O(p/\delta \cdot \log(1/\eps))$ so that
$$
q^t / p^t \le (1 - (2/3) \delta/p)^t \le \eps/2.
$$
For this choice of $t$, we have
$$
\E[T] \ge p^t / 4 = 2^{-O(\delta^{-1} \cdot \log(1/\eps))}.
$$
Let $R^*$ be a rectangle which achieves this average, that is
$$
\mu(R^*) - (1/\eps) \cdot \mu (R^* \cap f^{-1}(-1)) \ge 2^{-O(\delta^{-1} \cdot \log(1/\eps))}.
$$
In particular, we learn that both $\mu(R^*) \ge 2^{-O(\delta^{-1} \cdot \log(1/\eps))}$ (which satisfies the first requirement) and furthermore that $\mu(R^* \cap f^{-1}(-1)) \le \eps \cdot \mu(R^*)$, which implies that $\E_{\mu}[f|R^*] \ge 1-\eps$ (which satisfies the second requirement).
\end{proof}

\section{Deterministic protocols for low rank functions}
\label{sec:det}

We recall Theorem~\ref{thm:det} for the convenience of the reader.

\restate{Theorem~\ref{thm:det}}{
Let $f:X \times Y \to \{-1,1\}$ be a boolean function with rank $r$. Then there exists a deterministic protocol computing $f$ which uses $O(\sqrt{r} \cdot \log r)$ bits of communication.
}

We prove Theorem~\ref{thm:det} in the reminder of this section.
Let $f:X \times Y \to \{-1,1\}$ be a function of rank $r$. By Theorem~\ref{thm:disc} we have $\disc(f) \ge 1/8\sqrt{r}$. We apply Lemma~\ref{lemma:amplification} with $\eps=1/2r$ to derive the existence of a rectangle $R$ such that
$$
|R| \ge 2^{-O(\sqrt{r} \cdot \log(r))} \cdot |X \times Y|, \qquad \E[f|R] \ge 1-1/2r.
$$

Next, we apply a claim from~\cite{GL13:lowrank_equiv} which shows that nearly monochromatic rectangles in low rank matrices contain large monochromatic matrices.

\begin{claim}[\cite{GL13:lowrank_equiv}]\label{claim:mono}
Let $f:X \times Y \to \{-1,1\}$ be a function with $\rank(f)=r$ and $\E[f|R] \ge 1-1/2r$. Then there exists a rectangle $R' \subset R$ of size $|R'| \ge |R|/8$ such that $f$ is monochromatic on $R'$.
\end{claim}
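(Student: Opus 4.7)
The plan is to exhibit a large all-$+1$ sub-rectangle of $R$. Assume WLOG that the majority sign on $R$ is $+1$, so the hypothesis $\E[f|R] \ge 1 - 1/(2r)$ translates to: the fraction of $-1$ entries in $R$ is at most $1/(4r)$. Write $R = A \times B$ and let $M$ denote the restriction of $f$ to $R$, a rank-at-most-$r$ $\pm 1$ matrix. The goal is to produce an all-$+1$ sub-rectangle of size $\ge |R|/8$.

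First I would prune rows. By Markov applied to the row-wise count of $-1$'s, at least half of the rows of $R$ contain at most $|B|/(2r)$ minus entries; call this set of ``good'' rows $A_0$, with $|A_0| \ge |A|/2$. Now use low rank: let $k \le r$ be the rank of $M$ restricted to $A_0 \times B$, and fix $k$ linearly independent rows $a_1, \ldots, a_k \in A_0$ spanning its row space. Set $B^* := \bigcap_{i=1}^{k} \{b \in B : M(a_i,b) = +1\}$; since each $a_i$ contributes at most $|B|/(2r)$ bad columns, a union bound gives $|B^*| \ge |B|(1 - k/(2r)) \ge |B|/2$. The key structural observation is that for every $a \in A_0$ there exist real coefficients with $M(a,\cdot) = \sum_i \alpha_i^{(a)} M(a_i, \cdot)$, so for any $b \in B^*$ we have $M(a,b) = \sum_i \alpha_i^{(a)}$, a scalar depending only on $a$. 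Hence every row of $M|_{A_0 \times B^*}$ is constant on $B^*$, and because the entries lie in $\{\pm 1\}$, that constant is $+1$ or $-1$.

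Let $A_0^+ \subseteq A_0$ be the rows whose constant value on $B^*$ equals $+1$. A two-way count of $\sum_{A_0 \times B^*} M$ --- starting from the lower bound $\E[f| A_0 \times B] \ge 1 - 1/r$ (inherited from row pruning) and subtracting the trivial loss $|A_0| \cdot |B \setminus B^*| \le |A_0| \cdot |B|/2$ from columns outside $B^*$ --- yields $(|A_0^+| - |A_0 \setminus A_0^+|)|B^*| \ge |A_0||B|(1/2 - 1/r)$, and combined with $|A_0^+| + |A_0 \setminus A_0^+| = |A_0|$ this gives $|A_0^+| \ge |A_0|/2 \ge |A|/4$ (at least for $r$ not too small). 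The rectangle $A_0^+ \times B^*$ is then monochromatically $+1$ of size at least $(|A|/4)(|B|/2) = |R|/8$.

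The main obstacle is the tightness of the final counting step: the slack $1/(2r)$ in the hypothesis is essentially the minimum needed to overcome the $|B|/2$-sized loss from passing to $B^*$, so the constants at the row-pruning and column-intersection stages must be arranged carefully to land at $|R|/8$ rather than something weaker. For very small $r$ (say $r \in \{1,2\}$) the structural step degenerates, and those cases should be verified directly from the explicit form of rank-$1$ and rank-$2$ $\pm 1$ matrices.
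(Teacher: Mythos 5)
Your proof is essentially the paper's: prune rows by Markov, intersect $B$ against the good columns of a set of rows spanning $A_0 \times B$, and use linear algebra to see that every row of $A_0$ is then a $\pm 1$ constant on $B^*$. The only divergence is the final step, where the paper is simpler: since the claim only asks for a monochromatic rectangle of \emph{either} color, one can split $A_0$ into the $+1$-constant and $-1$-constant rows and take the larger half (size $\ge |A_0|/2 \ge |A|/4$, giving $|R'| \ge |R|/8$ directly), which avoids your counting argument entirely and, with it, the small-$r$ corner case that argument forces you to handle separately.
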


For completeness, we include the proof.

\begin{proof}
Let $R=A \times B$.
Since $f$ is a sign matrix, the condition $\E[f|R] \ge 1-1/2r$ implies that $f(x,y)=-1$ for at most $1/4r$ fraction of the inputs in $R$.
Let $A' \subset A$ be the set of rows for which at most $1/2r$ fraction of the elements are $-1$,
$$
A'=\big\{x \in A: \left|\{y \in B: f(x,y)=-1\}\right| \le |B|/2r\big\}.
$$
By Markov inequality, $|A'| \ge |A|/2$. Let $x_1,\ldots,x_r \in A'$ be indices so that their rows span $A' \times B$. Let
$$
B'=\{y \in B: f(x_1,y)=\ldots=f(x_r,y)=1\}.
$$
Since each of the rows $x_1,\ldots,x_r$ contain at most $1/2r$ fraction of elements which are $-1$ we have $|B'| \ge |B|/2$. Now, this implies that all
rows in $A' \times B'$ are either the all one or all minus one. Choosing the largest half gives the required rectangle.
\end{proof}

Hence, we showed that any function $f:X \times Y \to \{-1,1\}$ of rank $r$ contains a monochromatic rectangle of size
$2^{-O(\sqrt{r} \cdot \log(r))} \cdot |X \times Y|$.
Applying Theorem~\ref{thm:NW} with $c(r) = O(\sqrt{r} \cdot \log(r))$, we conclude that any such function can be computed by a deterministic
 protocol which used $O(\sqrt{r} \cdot \log(r))$ bits of communication.

\subsection{Proof sketch of the Nisan-Wigderson theorem}
\label{sec:nw}

We recall Theorem~\ref{thm:NW} of Nisan and Wigderson~\cite{NW94_On_Ra} for the convenience of the reader.

\restate{Theorem~\ref{thm:NW}}{
Assume that for any function $f:X \times Y \to \{-1,1\}$ of $\rank(f)=r$ there exists a monochromatic rectangle of size $|R| \ge 2^{-c(r)} |X \times Y|$. Then any boolean function of rank $r$ is computable by a deterministic protocol of complexity $O(\log^2{r}+\sum_{i=0}^{\log{r}} c(r/2^i))$.
}

\begin{proof}
Let $f$ be a function of rank $r$, and consider the partition of its corresponding matrix as
$$
\left(
\begin{array}{cc}
R&S\\
P&Q
\end{array}
\right)
$$
As $R$ is monochromatic, $\rank(R)=1$. Hence, $\rank(S)+\rank(P) \le r+1$. Assume w.l.o.g that $\rank(S) \le r/2+1$ (otherwise, exchange the role of the rows and columns player). The row player sends one bit, indicating whether their input $x$ is in the top or bottom half of the matrix. If it is in the top half the rank decreases to $\le r/2+1$. If it is in the bottom half, the size of the matrix reduces to at most $(1-2^{-c(r)})|X \times Y|$. Iterating this process defines a protocol tree. We next bound the number of leaves of the protocol. By standard techniques, any protocol tree can be balanced so that the communication complexity is logarithmic in the number of leaves (cf.~\cite[Chapter 2, Lemma 2.8]{KN97_Com}).

Consider the protocol which stops once the rank drops to $r/2$. The protocol tree in this case has at most $O(2^{c(r)} \cdot \log(m))$ leaves, and hence can be simulated by a protocol sending only $O(c(r) + \log\log(m))$ bits. Note that since we can assume $f$ has no repeated rows or columns, $m \le 2^{2r}$ and hence $\log\log(m) \le \log(r)+1$. Next, consider the phase where the protocol continues until the rank drops to $r/4$. Again, this protocol can be simulated by $O(c(r/2)+\log(r))$ bits of communication. Summing over $r/2^i$ for $i=0,\ldots,\log(r)$ gives the bound.
\end{proof}

\section{A conjecture related to matrix rigidity}
\label{sec:rigidity}

The proof of Theorem~\ref{thm:det} relies on the matrix $f$ being boolean. However, we conjecture that it can be generalized to show that any low rank sparse matrix contains a large zero rectangle.

\begin{conj}\label{conj:sparse}
Let $M$ be an $n \times n$ real matrix with $\rank(M)=r$ and such that $M_{i,j} \ne 0$ for at most $\eps n^2$ entries. Then there exists $A,B \subset [n]$ such that
$$
M_{a,b}=0 \qquad \forall a \in A, b \in B
$$
such that $|A|,|B| \ge n \cdot \exp(-O(\sqrt{\eps r}))$.
\end{conj}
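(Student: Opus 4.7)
The plan is to replicate, in the sparse real setting, the three-step architecture used to prove Theorem~\ref{thm:det}: (a) a discrepancy lower bound, (b) an amplification lemma, and (c) a rank-based extraction lemma that turns a near-zero rectangle into an all-zero one. Steps (b) and (c) transport with only cosmetic changes; the real obstacle is step (a).

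\textbf{(a) Discrepancy (analog of Theorem~\ref{thm:disc}).} This is the crux and main obstacle. Consider the $\pm 1$ auxiliary matrix $\bar N$ defined by $\bar N_{i,j} = 1$ if $M_{i,j}=0$ and $\bar N_{i,j}=-1$ otherwise; the goal is a lower bound of the form $\disc(\bar N) \ge \Omega(1/\sqrt{\eps r})$. Unlike in the boolean setting, $\rank(\bar N)$ is not controlled by $\rank(M)$ --- even a rank-$2$ real matrix can have a support pattern of rank $3$ --- so Theorem~\ref{thm:disc} does not apply directly. A natural workaround is to pass to the entry-wise square $M^{\odot 2}$, which is non-negative, has the same support as $M$, and rank at most $\binom{r+1}{2}$; however, this route appears to lose a $\sqrt{r}$ factor in the final rectangle size. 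A more refined discrepancy inequality that exploits both the rank of $M$ and the sparsity of its support simultaneously --- perhaps via a Frobenius-norm or $\gamma_2$-norm analysis adapted to sparse low-rank matrices --- seems to be the missing ingredient.

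\textbf{(b) Amplification (analog of Lemma~\ref{lemma:amplification}).} A rectangle $R$ on which at most $1/(4r)$ of the entries of $M$ are nonzero is the same thing as a rectangle on which $\E[\bar N|R] \ge 1 - 1/(2r)$. Vadhan's proof of Lemma~\ref{lemma:amplification} uses the $\pm 1$ structure only through the identity $\E_\mu[\bar N] = \mu(\bar N = 1) - \mu(\bar N = -1)$ and the ability to fix representatives in each preimage, so it applies to $\bar N$ verbatim: a discrepancy bound $\disc(\bar N) \ge \delta$, applied with accuracy parameter $1/(2r)$, yields such a rectangle of size $\ge \exp(-O(\delta^{-1}\log r)) \cdot n^2$. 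Plugging in $\delta = \Omega(1/\sqrt{\eps r})$ from (a) gives a rectangle of size $\ge \exp(-O(\sqrt{\eps r} \cdot \log r)) \cdot n^2$ with at most a $1/(4r)$ fraction of nonzero entries.

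\textbf{(c) Extraction (analog of Claim~\ref{claim:mono}).} Given a rectangle $R = A \times B$ with $\rank(M|_R)\le r$ and at most a $1/(4r)$ fraction of nonzero entries, we extract an all-zero sub-rectangle as follows. By Markov, at least half the rows of $A$ have nonzero-density at most $1/(2r)$; call this set $A'$. Pick rows $a_1,\dots,a_r \in A'$ spanning the row space of $M|_{A' \times B}$, and let $B' = \{b \in B : M_{a_i,b}=0 \text{ for all } i\}$. A union bound gives $|B'| \ge |B|/2$, and the spanning property forces $M_{a,b}=0$ for every $a \in A'$ and $b \in B'$. Hence $A' \times B'$ is all-zero of size at least $|R|/4$, which combined with (b) yields a zero rectangle with $|A|, |B| \ge n \cdot \exp(-O(\sqrt{\eps r} \cdot \log r))$ --- matching the conjecture up to the extra $\log r$ factor in the exponent.
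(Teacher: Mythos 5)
This statement is a \emph{conjecture} in the paper, not a theorem; the paper offers no proof of it. The only supporting material the paper gives is a tightness example (the set-disjointness-like matrix $M = NN^t$ with $N$ the matrix of all weight-$\sqrt{\eps r}$ indicator vectors) and a conditional corollary about matrix rigidity \emph{assuming} the conjecture. So there is no proof of record to compare your argument against.

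That said, your proposal is an honest and sensible attempt, and to your credit you explicitly flag where it breaks. Steps (b) and (c) do transport: the amplification lemma (Lemma~\ref{lemma:amplification}) is a statement purely about $\pm 1$ matrices, so it applies to the support-indicator matrix $\bar N$ unchanged, and the spanning argument in your step (c) is essentially Claim~\ref{claim:mono} and works for any real matrix of rank $r$ (spanning rows that vanish on $B'$ force every row they span to vanish on $B'$). But step (a) is a genuine gap, not a loose end. Theorem~\ref{thm:disc} lower-bounds the discrepancy of a $\pm 1$ matrix by $1/8\sqrt{\rank}$, and as you observe, $\rank(\bar N)$ can be wildly larger than $\rank(M)$ --- e.g.\ $M_{ij}=i-j$ has rank $2$ while its support pattern has full rank --- so nothing controls $\disc(\bar N)$. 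The entrywise-square detour does not rescue this: $M^{\odot 2}$ is nonnegative rather than $\pm 1$, so Theorem~\ref{thm:disc} does not apply to it either, and a rank bound of $\binom{r+1}{2}$ would in any case cost you the $\sqrt{r}$ factor you note. A discrepancy-type inequality that uses the rank of $M$ and the sparsity of its support \emph{jointly} is precisely what is missing, and it is not a small variant of the boolean argument. Finally, even granting such a bound, your chain yields $\exp(-O(\sqrt{\eps r}\log r))$, weaker than the conjectured $\exp(-O(\sqrt{\eps r}))$; closing that $\log r$ gap would require either a sharper extraction (avoiding the $\eps = 1/2r$ accuracy demand in (c)) or a different route entirely.
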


A related conjecture over $\F_2^n$, called the \emph{approximate duality conjecture}, was studied in~\cite{zewi2011affine,BLR12_An_Ad}, with relations to
two-source extractors and the log-rank conjecture. Here, we show that Conjecture~\ref{conj:sparse}, if true, would imply stronger bounds for matrix rigidity than currently known.

The bound in Conjecture~\ref{conj:sparse}, if true, is the best possible, as the following example shows. Let $M = N N^t$ where $N$ is an $n \times r$ matrix whose rows are all the $\{0,1\}^r$
vectors of hamming weight $\sqrt{r}/10$, and $n={r \choose \sqrt{r}/10} = r^{\Omega(\sqrt{r})}$. The matrix $M$ is $\eps=1/100$ sparse, as the probability that two uniformly chosen vectors
intersect is at most $1/100$. However, one can verify that the largest subsets $A, B \subset [n]$ such that $M_{a,b}=0$ for all $a \in A, b \in B$ correspond to choosing $A$ to be all
vectors whose support lies in the first half of the coordinates, and $B$ to be all vectors whose support lies in the last half of the coordinate. Furthermore, $|A|,|B| \le n \cdot \exp(-\Omega(\sqrt{r}))$.
The bound for general $\eps>0$ can be similarly obtained, by considering all vectors in $\{0,1\}^r$ of hamming weight $\sqrt{\eps r}$.

\paragraph{Matrix rigidity.}
A matrix $M$ is called $(r,s)$-rigid, if its rank cannot be made smaller than $r$ by changing at most $s$ entries in $M$. The problem of explicitly constructing rigid matrices was
introduced by Valiant~\cite{Valiant:rigidity} in the context of arithmetic circuits lower bounds, and was also studied by Razborov~\cite{Razborov89:rigid_cc} in
the context of separation of the analogs of PH and PSPACE in communication complexity. Despite much research, the best results to date are achieved by the so-called
"untouched minor" argument, which gives explicit matrices which are $(r,s)$-rigid with $s = \Omega\left(\frac{n^2}{r} \log\left(\frac{n}{r}\right)\right)$. See e.g. the excellent
survey of Lokam~\cite{lokam2009complexity} for details. We will prove the following corollary of Conjecture~\ref{conj:sparse}, which improves previous bounds by a logarithmic factor.

\begin{cor}
Assuming Conjecture~\ref{conj:sparse}, there exists an explicit $n \times n$ real matrix which is $(r,s)$-rigid for $s=\Omega\left(\frac{n^2}{r} \log^2\left(\frac{n}{r}\right)\right)$.
\end{cor}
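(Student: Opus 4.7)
My plan is to follow the standard ``untouched minor'' template for rigidity lower bounds, but to substitute Conjecture~\ref{conj:sparse} for the combinatorial step. The key new move is to apply the conjecture not to the sparse error $E$ but to the low-rank part $L = M - E$, which happens to be sparse whenever the candidate matrix $M$ itself is sparse. This buys exponentially larger zero rectangles --- of size $n\cdot\exp(-O(\sqrt{\eps r}))$ rather than the $\mathrm{poly}(1/\eps)$-sized ones given by plain counting --- and this is exactly where the improvement over the $\Omega((n^2/r)\log(n/r))$ bound will come from. Concretely, I would take $M$ to be an explicit $n\times n$ matrix parametrized by a density $\eps_0$ (to be optimized) with two properties: (i) $M$ has at most $\eps_0 n^2$ nonzero entries, and (ii) every sufficiently large sub-rectangle $A\times B$ contains at least $\Omega(\eps_0|A||B|)$ nonzeros of $M$. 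Natural candidates include $\{0,1\}$-incidence matrices of projective planes, affine designs, or pseudorandom bipartite expanders, for which property (ii) follows from an expander mixing lemma.

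Now suppose toward contradiction that $M = L + E$ with $\rank(L)\le r$ and $|E|_0 \le s$. Then $L = M - E$ has at most $\eps_0 n^2 + s$ nonzero entries, and Conjecture~\ref{conj:sparse} applied with sparsity parameter $\eps_L := \eps_0 + s/n^2$ produces a rectangle $A\times B$ with $L|_{A\times B} = 0$ and $|A|,|B| \ge n\cdot\exp(-O(\sqrt{\eps_L\, r}))$. On $A\times B$ we have $M|_{A\times B} = E|_{A\times B}$, so property (ii) of $M$ forces
$$
s \;\ge\; \bigl|\{\text{nonzeros of }E\text{ in }A\times B\}\bigr| \;\ge\; \Omega(\eps_0|A||B|) \;\ge\; \Omega\!\bigl(\eps_0 n^2\exp(-O(\sqrt{\eps_L\, r}))\bigr).
$$
Choosing $\eps_0 \asymp \log^2(n/r)/r$ balances the two factors so that $\sqrt{\eps_0 r}\asymp\log(n/r)$, and then dyadically iterating the argument across rank scales $r, r/2, r/4, \dots$ --- in the same spirit as the summation behind Theorem~\ref{thm:NW} --- yields the target bound $s = \Omega\!\bigl((n^2/r)\log^2(n/r)\bigr)$. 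The two factors of $\log(n/r)$ appear naturally: one comes from populating the exponent inside the conjecture, and the second from the $O(\log(n/r))$ scales in the iteration.

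The main obstacle I expect is calibrating the explicit matrix $M$ so that its density $\eps_0$ and its sub-rectangle richness line up simultaneously with the exponent supplied by Conjecture~\ref{conj:sparse}: if $\eps_0$ is too small, property (ii) leaves too few nonzeros of $M$ in the zero rectangle of $L$; if $\eps_0$ is too large, that zero rectangle shrinks below useful size. A secondary but more technical obstacle is the dyadic iteration itself, since after removing the zero rectangle the residual matrix must again admit a useful sparse/low-rank decomposition with controlled parameters. If the iteration resists a clean execution, a single-shot application of the conjecture at the optimal $\eps_0$ should already recover the previously known $\Omega((n^2/r)\log(n/r))$ bound; squeezing out the second logarithm is the heart of the work.
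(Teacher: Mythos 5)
Your route is genuinely different from the paper's, but it has a gap that the paper's argument sidesteps entirely. You apply Conjecture~\ref{conj:sparse} to the low-rank part $L = M - E$, which forces you to make $M$ itself sparse (otherwise $L$ is not sparse), and then to impose an expander-type richness on $M$ so that the zero rectangle of $L$ contains many nonzeros of $M$. But once you unfold the resulting inequality $s \gtrsim \eps_0 n^2 \exp(-O(\sqrt{\eps_L r}))$ with $\eps_L = \eps_0 + s/n^2$, the optimum over $\eps_0$ sits at $\eps_0 r = \Theta(1)$ and yields only $s = \Omega(n^2/r)$. Pushing $\eps_0$ up to $\Theta(\log^2(n/r)/r)$, as you propose, makes the exponential factor shrink to $(r/n)^{\Theta(1)}$, which wipes out the gain; a single shot of your calibration does not even recover the known $\Omega((n^2/r)\log(n/r))$ bound, let alone the target, and the dyadic-iteration sketch does not obviously repair this because there is no natural ``residual'' rank-reduction step in a rigidity argument (unlike in the Nisan--Wigderson protocol).

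The observation you missed is that the \emph{error} matrix $S = M - L$ is not only $s$-sparse but also has low rank: $\rank(S) \le \rank(M) + \rank(L) < 2r$. The paper applies Conjecture~\ref{conj:sparse} to $S$ rather than to $L$. This frees $M$ to be a dense, explicit rank-$r$ matrix in which every $r \times r$ minor is nonsingular (e.g.\ $M = NN^t$ with any $r$ rows of the $n\times r$ matrix $N$ independent). The conjecture then gives a rectangle $A\times B$, with $|A|,|B| \ge n\exp(-O(\sqrt{\eps r}))$ and $\eps = s/n^2$, on which $S$ vanishes, hence $M|_{A\times B} = L|_{A\times B}$. If $|A|,|B| \ge r$ this rectangle contains an $r\times r$ minor of $M$ of full rank, forcing $\rank(L) \ge r$, a contradiction. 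So $n\exp(-O(\sqrt{\eps r})) < r$, which rearranges to $s = \Omega\left(\frac{n^2}{r}\log^2\frac{n}{r}\right)$. No expander property, no density tuning, no iteration: the logarithm is squared because you need the \emph{full} exponential decay to get the rectangle down to size $r$, whereas your version only needs the rectangle to be large enough to carry $\approx \eps_0 |A||B|$ nonzeros, which is a weaker (and, as computed above, insufficient) constraint.
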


\begin{proof}
Let $M$ be an $n \times n$ matrix of rank $r$, such that all $r \times r$ minors of $M$ have full rank. For example, such a matrix may be constructed as $M= N N^t$ where $N$ is an $n \times r$
matrix such that any $r$ rows of $N$ are linearly independent. Assume that $M$ is not $(r,s)$-rigid. Then, we can decompose
$$
M=L+S,\quad \rank(L)<r, \quad S \textrm{ is s-sparse}.
$$
Let $s=\eps n^2$. The matrix $S$ is both $s$-sparse and low rank, as $\rank(S) \le \rank(M)+\rank(L) < 2r$. Hence, by Conjecture~\ref{conj:sparse}, there exist $A,B \subset [n]$ of size
$|A|,|B| \ge n \cdot \exp(-O(\sqrt{\eps r}))$ such that $S_{a,b}=0$ for all $a \in A, b \in B$. Hence, $M_{a,b}=L_{a,b}$. If $|A|,|B| \ge r$, we must have that $\rank(L) \ge \rank(M)=r$. So,
$n \cdot \exp(-O(\sqrt{\eps r})) < r$ and the corollary follows by rearranging the terms.
\end{proof}

\section{Further research}
\label{sec:summary}

We provide a bound on the communication complexity that is near to linear in the discrepancy. This seem to be tight for our proof technique.
The dependence of the discrepancy on the rank, $\disc(f) \ge \Omega(1/\sqrt{\rank(f)})$, is tight in general, as can be seen for example
by taking $f$ to be the inner product function. However, it may be that further assuming that the rank of $f$ is much smaller than its size
might allow to prove better bounds. Another interesting direction is to combine our current approach with the additive combinatorics
approach of~\cite{BLR12_An_Ad}. Finally, we note that it may be possible to generalize the techniques developed here in order to relate
the approximate rank of a function and its randomized or quantum communication complexity.

\paragraph{Acknowledgements} I thank Dmitry Gavinsky, Pooya Hatami, Russell Impagliazzo and Adi Shraibman for helpful discussions, and Salil Vadhan 
for allowing to present his simplified proof of Lemma~\ref{lemma:amplification}.

\bibliographystyle{abbrv}
\bibliography{cc-root-rank}

\end{document}